\theoremstyle{plain}
\newtheorem{thm}{Theorem}[section]
\theoremstyle{definition}
\newtheorem{defn}{Definition}[section]
\newcommand{\Ker}{\operatorname{Ker}}
\begin{document}
\begin{frontmatter}

\title{On decoding procedures of intertwining codes}
%\tnotetext[mytitlenote]{Fully documented templates are available in the elsarticle package on \href{http://www.ctan.org/tex-archive/macros/latex/contrib/elsarticle}{CTAN}.}

%% Group authors per affiliation:

%\fntext[myfootnote]{Since 1880.}

%% or include affiliations in footnotes:
%\author{Pramod Kumar Maurya}
%\ead{pramod$\_$kumar22490@hotmail.com}
%\address{Department of Mathematics\\ 
%         National Institute of Technology Durgapur\\
%         Burdwan, India.}
%
\author{Shyambhu Mukherjee}
\ead{pakummukherjee@gmail.com}
\address{SMU Department,
         Indian Statistical Institute\\ 
         Bangalore, Karnataka, India.}

\author{Joydeb Pal}
\ead{joydebpal77@gmail.com}
\address{Department of Mathematics\\ 
         National Institute of Technology Durgapur\\
         Burdwan, India. \\
         % \email{joydebpal77@gmail.com}
         }

\author{Satya Bagchi} %\corref{mycorrespondingauthor}} 
%%\cortext[mycorrespondingauthor]{Corresponding author}
\ead{satya.bagchi@maths.nitdgp.ac.in}
\address{Department of Mathematics\\ 
         National Institute of Technology Durgapur\\
         Burdwan, India.}

%%%%%%%%%%%%%%%%%%%%%%%%%%%%%%%%%%%%%%%%  abstract  %%%%%%%%%%%%%%%%%%%%%%%%%%%%%%%%%%%%%%%%%%%%%%%%%%%%%%%
\begin{abstract}
One of the main weakness of the family of centralizer codes is that its length is always $n^2$. Thus we have taken a new matrix equation code called intertwining code. Specialty of this code is the length of it, which is of the form $n k$. We establish two decoding methods which can be fitted to intertwining codes as well as for any linear codes. We also show an inclusion of linear codes into a special class of intertwining codes.
\end{abstract}
%%%%%%%%%%%%%%%%%%%%%%%%%%%%%%%%%%%%%%%%%%  keyword  %%%%%%%%%%%%%%%%%%%%%%%%%%%%%%%%%%%%%%%%%%%%%%%%%%%%%%'
\begin{keyword}
Linear codes \sep Centralizer codes \sep Syndrome decoding.

MSC: 94B05, 94B35, 15A24.
\end{keyword}
%%%%%%%%%%%%%%%%%%%%%%%%%%%%%%%%%%%%%%   Introduction   %%%%%%%%%%%%%%%%%%%%%%%%%%%%%%%%%%%%%%%%%%%%%%%%%%%
\end{frontmatter}
\newpage
%\linenumbers
\section{Introduction}
A code of length $n^2$ is obtained by taking centralizer of a matrix from the vector space $\mathbb{F}_q^{n \times n}$. As a consequence, it cannot reach to most of the sizes. Whereas a code of length $n \cdot k$ is formed by taking the solutions of matrix equation for some matrices $A \in \mathbb{F}_q^{n \times n}$ and $C \in \mathbb{F}_q^{k \times k}$ over $\mathbb{F}_q$. Here we have taken one such matrix equation of the form $AB=BC$ where the matrix $A  \in \mathbb{F}_q^{n \times n}$  and the matrix $C \in \mathbb{F}_q^{k \times k}$. Thus the set of solutions $\mathcal{R}(A,C)= \{B \in \mathbb{F}_q^{n \times k}|AB=BC\}$ gives a code of length $n\cdot k$ by a similar construction in \cite{Alahmadi2017235} \cite{Alahmadi201468}  \cite{joydeb1}. This code is named intertwining code \cite{Glasby2107}. This code can extend the use of better decoding ability of GTC codes \cite{joydeb1} into a vast class of linear codes.

Finding efficient error correcting procedure for a linear code is a challenging problem. If we look upon centralizer codes and twisted centralizer codes, we see that there was a very nice method to detect and correct single error using syndrome. This technique cannot provide an easy task for correcting more than single errors. Thus we present two algorithms to do better decoding procedure. Our algorithms work for the family of intertwining codes as well as for any linear codes.

In this paper, we explore few properties on intertwining codes. Then we show that there exists a intertwining code for which a certain linear code is a subcode of it. We find a way to effectively find an upper bound on the minimum distance of the code along with proving the existence of a certain weight codeword based upon the matrices $A$ and $C$. At last we show a possible way to find the matrix pair $(A,C)$ of intertwining code related to a linear code in total computational perspective.

%\section{Preliminaries}
Throughout this paper we denote $\mathbb{F}_q$ as a finite field with $q$ elements and $\mathbb{F}_q^{n \times k}$ as the set of all matrices of order $n \times k$ over $\mathbb{F}_q$. We take two matrices $A$ and $C$ from the vector spaces $\mathbb{F}_q^{n \times n}$ and $\mathbb{F}_q^{k \times k}$ respectively and also $O$ denotes the null matrix.

\begin{defn}
 For any matrices  $A \in \mathbb{F}_q^{n \times n}$ and $C \in \mathbb{F}_q^{k \times k}$, the set $\mathcal{R}(A, C)= \lbrace B \in \mathbb{F}_q^{n \times k}| AB=BC \rbrace$ is called intertwining code \cite{Glasby2107}. 
\end{defn}

Clearly, the set $\mathcal{R}(A, C)$ is a linear subspace of the vector space $\mathbb{F}_q^{n \times n}$ and hence it is a linear code. The formation of the code is very similar to \cite{Alahmadi2017235} \cite{Alahmadi201468} \cite{joydeb1}. We develop few basic results on intertwining codes as follows.

%\begin{thm}

\begin{enumerate}

\item If $A \in \mathcal{R} (A,C)$ then $C$ should be an $n \times n$ matrix. Now let $A$ belongs to the intertwining code then $A^2=AC$. If $A$ is idempotent then $A(C-I_n) = O$ and if $A^2=0$  then $C$ belongs to the matrix wise null space of $A$. But these are not possibly the whole solution of the equation $A^2=AC$.
\item If $A$ and $C$ are invertible matrices of order $n$ and $k$ respectively, then $\mathcal{R}(A, C)$ is isomorphic to $\mathcal{R}(A^{-1}, C^{-1})$.
\item Let $X \in \mathbb{F}_q^{n \times n}$  and $Y \in \mathbb{F}_q^{k \times k}$ are two invertible matrices. Then $\mathcal{R}(A,C)$ is isomorphic to conjugate code $\mathcal{R}(XAX^{-1}, YCY^{-1})$.
\item Let $A$ is such a matrix that $c$ is not an eigenvalue and $C =cI_k$. Then the matrix equation $AB=BC$ possesses a trivial solution, i.e., $\mathcal{R}(A, C)=\{0\}$.
\end{enumerate}

%%%%%%%%%%%%%%%%%%%%%%%%%%%%%%%%%%%%%%%%%%%%%%%%%%%%%%%%%%%%%%%%%%%%%%%%%%%%%%%%%%%%%%%%%%%%%%%%%%%%%%%%%%%

\section{Analysis on weight distribution}
\begin{thm}
Let $J \in \mathbb{F}_2^{n \times k}$ be the matrix with all entries are $1$. If $J \in \mathcal{R}(A,C)$ then the weight distribution of the code $\mathcal{R}(A,C)$ is symmetric, i.e., $A_i=A_{nk-i}$ for all $i$. In addition, if the matrix $A$ is of row sum equal to $0$ and $C$ is of column sum equal to $0$ then $J \in \mathcal{R}(A,C)$.
\end{thm}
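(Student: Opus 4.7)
The plan is to split the theorem into its two claims and handle each by a direct linear-algebra argument, exploiting the fact that we are working in characteristic two.

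For the symmetry of the weight distribution, the natural approach is to produce a weight-complementing bijection on $\mathcal{R}(A,C)$. Since $\mathcal{R}(A,C)$ is a linear subspace and $J$ is assumed to lie in it, the map $\varphi: B \mapsto B + J$ sends $\mathcal{R}(A,C)$ to itself and is an involution. Over $\mathbb{F}_2$, adding $J$ flips every coordinate of $B$ (viewed as a vector of length $nk$), so $\mathrm{wt}(B+J) = nk - \mathrm{wt}(B)$. Hence $\varphi$ restricts to a bijection between the set of codewords of weight $i$ and the set of codewords of weight $nk-i$, giving $A_i = A_{nk-i}$ for every $i$. This part should be routine.

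For the second claim, I would just compute $AJ$ and $JC$ entrywise. The $(i,j)$-entry of $AJ$ is $\sum_{\ell=1}^{n} A_{i\ell}\cdot 1$, which is the $i$-th row sum of $A$; by hypothesis this is $0$ for every $i$, so $AJ = O$. Symmetrically, the $(i,j)$-entry of $JC$ is $\sum_{\ell=1}^{k} 1\cdot C_{\ell j}$, which is the $j$-th column sum of $C$, and again vanishes. Therefore $AJ = JC = O$, so $J$ satisfies $AB = BC$ and lies in $\mathcal{R}(A,C)$.

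The main conceptual step is recognising that $J$ plays the role of the all-ones vector from the classical theory of binary codes with the all-ones codeword, and that its presence forces complement-symmetry of the weight enumerator. I do not anticipate a real obstacle; the only thing to be careful about is the identification of $\mathbb{F}_2^{n \times k}$ with $\mathbb{F}_2^{nk}$ so that Hamming weight and coordinate-flipping behave as expected, and that $\mathcal{R}(A,C)$ is indeed closed under addition (which is immediate from linearity).
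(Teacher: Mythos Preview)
Your proposal is correct and follows essentially the same route as the paper: the paper also uses the involution $B \mapsto B+J$ on the linear code to pair weight-$i$ codewords with weight-$(nk-i)$ codewords, and computes $AJ$ and $JC$ entrywise to see that the row-sum and column-sum hypotheses force $AJ=JC=O$. Your write-up is slightly more explicit about the bijection being an involution and about the entrywise computation, but the argument is the same.
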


\begin{proof}
 Let $B_1 \in \mathcal{R}(A,C)$ with weight $i$ where $0 \leq i \leq nk$. If $J \in \mathcal{R}(A,C)$ then $J +B_1 \in \mathcal{R}(A,C)$, since $\mathcal{R}(A,C)$ is a linear code. Now weight of $ J +B_1$ is $nk-i$. So, whenever a codeword of weight $i$ appears, simultaneously there exists a codeword with weight $nk-i$. Thus we can say weight distribution is symmetric, i.e., number of codewords with weight $i$ = number of codewords with weight $nk-i$, where $i$ is a positive integer with $0 \leq i \leq nk$, i.e., $A_i=A_{nk-i}$.

For the second part, observe that $AJ = [ row$-$1$-$sum ~row$-$2$-$sum ~\dots ~row$-$n$-$sum]^T$ $\cdot$ $[1 ~1 ~1 ~1 ~\dots ~1]$ and similarly observe that $JC=[1 ~1 ~1 ~\dots ~1]^T \cdot [col$-$1$-$sum ~col$-$2$-$sum$ $ ~\dots ~col$-$k$-$sum]$. Hence it is clear that if the matrices $A$ has row sum equal to $0$ and $C$ has column sum equal to $0$, then $J$ satisfies the equation $AJ-JC=O$ since  $AJ=O$ and $JC=O$. 
\end{proof}

\subsection{Existence of a certain weight codeword}
Consider the matrices $A$ and $C$ for which $\mathcal{R}(A,C)$ has been constructed. Let us assume that none of them are invertible. Therefore, there will be dependent columns and rows. If $d_A$ is the minimum number of columns which are dependent in $A$ and $d_C$ is the minimum number of rows which are dependent in  $C$, then at least one codeword in $\mathcal{R}(A,C)$ of weight $d_A \cdot d_C$ exists. In this case, we have $\sum c_i A_i=0$ with $c_i \neq 0, ~\forall i$ as the relation is taken to be of minimum number of columns. Then we have a column vector of those $c_i$'s. Similarly,  we get a row vector from $C$. Then we take the multiplication of these vectors as $n\times 1$ into $1 \times k$ matrix multiplication. So this vector will give us an element of $\mathcal{T}_O=\{ B \in \mathbb{F}_q^{n \times k}| AB=BC=O\}$.  %(with the meaning same as the first paper on GTCC).
 
As the codeword of weight $d_A \cdot d_C$ exists, by construction this weight $d_A \cdot d_C$ is less than or equal to the $(r_A+1)\cdot(r_C+1)$ hence the minimum weight is also less than or equal to $(r_A+1) \cdot(r_{C}+1)$. We list the result in the above discussion as follows.

\begin{thm} 
For an intertwining code generated by $A$ and $C$, there exists at least one codeword of weight $d_A \cdot d_C$ and therefore the minimum distance $d$ of $\mathcal{R}(A,C)$ is less than or equal to $d_A \cdot d_C$, i. e., $d(\mathcal{R}(A,C)) \leq d_A\cdot d_C$.
\end{thm}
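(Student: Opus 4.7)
The plan is to exhibit an explicit codeword of weight exactly $d_A \cdot d_C$ as the outer product of a minimal column dependency of $A$ with a minimal row dependency of $C$, and then conclude the minimum-distance bound from the standard fact that $d(\mathcal{R}(A,C))$ is at most the weight of any nonzero codeword.

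Concretely, I would first invoke the definition of $d_A$ to pick a column vector $\mathbf{u} \in \mathbb{F}_q^{n \times 1}$ with exactly $d_A$ nonzero entries such that $A\mathbf{u} = 0$: the support of $\mathbf{u}$ indexes a set of $d_A$ linearly dependent columns of $A$, and the entries of $\mathbf{u}$ supply the coefficients of the dependency (which are all nonzero by minimality of $d_A$). Dually, I would pick a row vector $\mathbf{v} \in \mathbb{F}_q^{1 \times k}$ with exactly $d_C$ nonzero entries such that $\mathbf{v} C = 0$.

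Then I would set $B := \mathbf{u}\mathbf{v} \in \mathbb{F}_q^{n \times k}$ and check that $B \in \mathcal{R}(A,C)$ using associativity of matrix multiplication:
\begin{equation*}
AB = A(\mathbf{u}\mathbf{v}) = (A\mathbf{u})\mathbf{v} = 0, \qquad BC = (\mathbf{u}\mathbf{v})C = \mathbf{u}(\mathbf{v} C) = 0,
\end{equation*}
so in particular $AB = BC$, placing $B$ in the stronger set $\mathcal{T}_O \subseteq \mathcal{R}(A,C)$. The $(i,j)$ entry of $B$ equals $u_i v_j$, which is nonzero precisely when both $u_i$ and $v_j$ are nonzero; hence the weight of $B$ is exactly $|\mathrm{supp}(\mathbf{u})|\cdot|\mathrm{supp}(\mathbf{v})| = d_A \cdot d_C$. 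Since $\mathbf{u}$ and $\mathbf{v}$ are nonzero, $B$ is a nonzero codeword, so $d(\mathcal{R}(A,C)) \le d_A \cdot d_C$.

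There is no real obstacle here; the only point that requires care is the justification that a dependency vector with \emph{exactly} $d_A$ (respectively $d_C$) nonzero coordinates exists, which is immediate from the \emph{minimality} in the definitions of $d_A$ and $d_C$: any nonzero coefficient set smaller than $d_A$ would contradict the minimum. Once this is noted, the rest is the routine outer-product computation above.
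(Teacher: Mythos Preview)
Your proof is correct and follows essentially the same approach as the paper: both construct the codeword as the outer product $\mathbf{u}\mathbf{v}$ of a minimum-weight column vector in $\Ker A$ with a minimum-weight row vector annihilating $C$, and note that this lies in $\mathcal{T}_O \subseteq \mathcal{R}(A,C)$. Your write-up is in fact more explicit than the paper's in verifying the weight of $\mathbf{u}\mathbf{v}$ and the membership $AB=BC=0$, but the underlying argument is identical.
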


\begin{proof}
The product code  $\Ker A \otimes \Ker C^T$ belongs to intertwining code. Now according to definition of $d_A$ there exists $d_A$ number of columns dependent such that no set of lesser cardinality is dependent. Therefore a column vector $v$ of weight $d_A$ exists such that $Av=0$. Similarly we find a row vector $w$ of weight $d_C$ with $wC=0$ therefore $vw$ exists in the code with the rest following.
\end{proof}
%%%%%%%%%%%%%%%%%%%%%%%%%%%%%%%%%%%%%%%%%%%%%%%%%%%%%%%%%%%%%%%%%%%%%%%%%%%%%%%%%%%%%%%%%%%%%%%%%%%%%%%%%%

%%%%%%%%%%%%%%%%%%%%%%%%%%%%%%%%%%%%%%%%%%%%%%%%%%%%%%%%%%%%%%%%%%%%%%%%%%%%%%%%%%%%%%%%%%%%%%%%%%%%%%%%%%%
\section{Decoding process}
Encoding procedure for intertwining codes are similar to the encoding procedures mentioned in \cite{Alahmadi2017235} \cite{Alahmadi201468} \cite{joydeb1}. To check whether a message is erroneous, it is required to define syndrome.

\begin{defn}
Let $A$ be a square matrix of order $n$ and $C$ be a square matrix of order $k$. Then the syndrome of an element $B \in \mathbb{F}_q^{n \times k}$ with respect to the intertwining code $\mathcal{R}(A,C)$ is defined as $S_{A,C}(B)= AB-BC$.
\end{defn}

If a word is erroneous then $S_{A,C}(B)= AB-BC \neq O$. It is an easiest technique to check whether a codeword belongs to the code or not. Here we propose two algorithms to correct errors in an intertwining code.

\subsection{\textbf{Algorithm 1:}}
%Procedure [similar to syndrome decoding but not the exact syndrome decoding]:
As we do not know how to find the minimum weight element algorithmically hence we can not keep it inside our decoding process as it needs a lot of time. That's why we modify our procedure a bit for the Step $3$. As the vector space $\mathbb{F}_q^{n \times k}$ can be broken down into intertwining code and its cosets so we will calculate minimum weight element first for all the cosets separately. Now we input list of coset leaders as a table inside our decoding system.

\begin{enumerate}
\item[Step $1.$] Receiver received a word $B'$ from the channel.
\item[Step $2.$] Calculate the syndrome $S_{A,C}(B')=AB'-B'C$. If  $S_{A,C}(B')=O$ then the transmitted codeword is $B'$ and goto Step $5$. Otherwise, goto Step $3$.
\item[Step $3.$] Find this syndrome’s corresponding least weight error matrix $E$, already stored in the table. The matrix $E$ is the error pattern for the word $B'$.
\item[Step $4.$] Since, both $B'$ and the $E$ belong to the same coset $B'+\mathcal{R}(A,C)$, then $B'-E$ is the transmitted codeword of the code $\mathcal{R}(A,C)$. 
\item[Step $5.$] End.
\end{enumerate}

\noindent \textbf{Caution!} This table may look like syndrome look-up table and it will work similarly, but here we must remember following differences. 

\begin{enumerate}
\item This syndrome is different from the standard syndrome which is obtained by multiplying the codeword of length $n$ with a $(n-k) \times n$ parity-check matrix, resulting into $n-k$ length i.e. $7-4=3$ length for hamming code. Here we will get a whole matrix of length $n^2$, same length as our codewords, as the minimum weight codeword of a coset.
\item This table is not the syndrome look-up table for the intertwining code. A different one can be constructed but the construction will be more complex if we try to do so.
\end{enumerate}

%%%%%%%%%%%%%%%%%%%%%%%%%%%%%%%%%%%%%%%%%%%%%%%%%%%%%%%%%%%%%%%%%%%%%%%%%%%%%%%%%%%%%%%%%%%%%%%%%%%%%%%%%%%
\subsection{\textbf{Algorithm 2:}}
In the previous algorithm, we have to store whole of a table which occupy a good amount of memory. So, we provide a better algorithm which does not take the memory that much and achieves the least weight error matrix or error pattern by this algorithm. 

For this algorithm, partition the codewords of the code $\mathcal{R}(A,C)$ by its weight distribution. Let $\mathcal{R}(A,C) = \cup_{i=1}^k A_k$, where $A_j = \lbrace c \in \mathcal{R}(A,C): wt(c)=a_j \rbrace$. We easily see that $A_i \cap A_j = \phi, ~\forall ~i \neq j$. Now weights available are $a_0, a_1, \dots, a_k$. Now, the algorithm is presented below.

\begin{enumerate}
\item[Step $1.$] Receiver received a word $B'$ and start to calculate its syndrome $S_{A,C}(B')=AB'-B'C$. To reduce complexity for computing the syndrome, we calculate bitwise syndrome. Whenever a nonzero bit appears in the syndrome, we stop the computation of the syndrome and goto Step $2$. Otherwise if $S_{A,C}(B')= 0$ then the transmitted word is $B'$ and goto Step $5$.
\item[Step $2.$] Find $b=wt(B')$, weight of $B'$. Now evaluate the intervals in two cases. If $n \geq b+a_i$, then we take the interval $[|b-a_i|, b+a_i]$, otherwise take the interval $[|b-a_i|, 2n-b-a_i]$.
\item[Step $3.$] Delete those intervals whose lower bounds are greater than $t$, where $t = \lfloor \frac{d-1}{2} \rfloor$ and $d$ is the minimum distance of the code $\mathcal{R}(A,C)$. 
\item[Step $4.$] Arrange remaining intervals in ascending order of the lower bounds of intervals. We store ordering of index. Take the first interval then take corresponding weight. Let $a_m$ be the corresponding weight. Now find $S = \{A+B': wt(A+B') \leq  wt(A'+B') ~\forall ~A' \in A_m\}$. Select $A$ such that $wt(A+B')$ is minimum for $A \in A_m$. If not unique, choose one $A$ randomly. Let $E=A+B'$. Find weight of $E$ and then delete intervals with lower bound greater than or equal to $wt(E)$. Go to next partition. In same procedure find the least weight word $E'$. Compare with the previous least weight word $E$. Choose minimum weight among these and save it to $E$. Continuing this process for further partitions we will get a matrix $E$, which is the error pattern. Therefore the transmitted codeword was $B'-E$.
\item[Step $5.$] End.
\end{enumerate}

\textbf{Note:} In Step $4$ of the above algorithm, error pattern matrix $E$ is always unique because if there are $E_1$ and $E_2$ such that both $B'-E_1$ and $B'-E_2$ have weight less than $t$ then distance between $E_1$ and $E_2$ will be less than $2t<d$ which is impossible for two codewords.

\textbf{Analysis:} In our algorithm we will store the code sorted according to weights. So we will have to store $2^k$ codewords at our worse, where $k$ is the dimension of the code. This is less than $2^{n-k}$ if $k<\frac{n}{2}$, where $n$ is the length. So for a code of dimension less than $\frac{n}{2}$, our algorithm takes less memory. Now we can view the weight wise sorted codewords as non-linear codes of constant weight. So we can store each sorted partition using the standard representation of a non-linear code using kernel of it and it's coset representatives as discussed in \cite{Villanueva2015}. Thus this will take shorter memory even. 

%%%%%%%%%%%%%%%%%%%%%%%%%%%%%%%%%%%%%%%%%%%%%%%%%%%%%%%%%%%%%%%%%%%%%%%%%%%%%%%%%%%%%%%%%%%%%%%%%%%%%%%%%%

\section{Can any linear code be represented as a subcode of intertwining code?}
Let us consider an $l$-dimensional subspace of the $nk$-dimensional vector space over $\mathbb{F}_q$. This $nk$ dimensional vector space can be represented as the vector space of matrices $\mathbb{F}_q^{n \times k}$. Can we find a pair of matrices $A \in \mathbb{F}_q^{n \times n}$ and $C \in \mathbb{F}_q^{k \times k}$ to construct an intertwining code $\mathcal{R}(A,C)$ which contains the $l$ dimensional linear code? This problem can be formulated as follows.

Given a linear code $\mathcal{C}$ over $\mathbb{F}_q$ of length $nk$ and dimension $l$. So, the linear code $\mathcal{C}$ has a generator matrix $G$ of order $l \times nk$. Can we represent the linear code $\mathcal{C}$ as a subcode of an intertwining code $\mathcal{R}(A,C)$?

\subsection{Forming the equations and existence of solutions}
We represent each row of the generating matrix as an $n \times k$ order matrix. So, there are $l$ linearly independent matrices $B_1, B_2, \dots, B_l$ corresponding to the generator matrix of the known linear code $\mathcal{C}$. Our aim is to find two such non-zero matrices $A$ and $C$ which satisfy the equation $AB_i=B_iC$ for each $B_i$, $i=1, 2, \dots, l$. To find $A$ and $C$, let us consider the entries of $A$ and $C$ are variables. Then we get $n^2+k^2$ variables. For each matrix $B_i$, there are $nk$ equations and there will be a total $nkl$ equations satisfying these variables. Here we use the mapping  $\bar~:F_q^{n \times n} \rightarrow F_q^{n^2 \times 1}$ with $B \mapsto \bar{B}$, where the matrix $\bar{B}$ is formed by concatenating columns of $B$. Now the equation $AB_i =B_iC$ can be written as $$ AB_i - B_iC = O \Rightarrow \begin{bmatrix} I_n \otimes B_i^T & | & -B_i \otimes I_k \end{bmatrix} \begin{bmatrix} \bar{A} \\ \bar{C} \end{bmatrix}=O \Rightarrow D_i \begin{bmatrix} \bar{A} \\ \bar{C} \end{bmatrix} = O,$$ where $D_i= \begin{bmatrix} I_n \otimes B_i^T ~|-B_i \otimes I_k \end{bmatrix}$. Let $ D = \begin{bmatrix} D_1 & D_2 & \cdots & D_l \end{bmatrix}^T$. Then the above system of $l$ equations is written as 

\begin{equation}\label{eq_1}
 D \begin{bmatrix} \bar{A} \\ \bar{C} \end{bmatrix} =O.
\end{equation}
 Here $D_i$ is coming from each $B_i$. Now the final solution is the solution of the equation (\ref{eq_1}). The matrix $D$ is of order $nkl \times (n^2+k^2)$. Thus the existence of non-trivial solution of above equation is reached if $n^2+k^2 \geq nkl$. This is a sufficient condition.
 
Now we see that each $D_i$ consists of two blocks, i.e., $B_i \otimes I_k$ and another block $I_n \otimes B_i^T$. So $D_i= \begin{bmatrix} I_n \otimes B_i^T ~|-B_i \otimes I_k \end{bmatrix}=\begin{bmatrix} I_n \otimes B_i^T ~|~ O \end{bmatrix}+\begin{bmatrix} O ~| -B_i \otimes I_k \end{bmatrix}=A_1+A_2$.
 
Now $rank(D_i) \leq rank(A_1)+rank(A_2)$. Clearly, $rank(A_1)=n \cdot rank(B_i^T) = n \cdot rank(B_i)$ and $rank(A_2)= k \cdot rank(B_i)$. Now, we get $rank(D_i) \leq (n+k) \cdot rank(B_i)$. So, $rank(D) \leq \sum_{i=1}^l (n+k) \cdot rank(B_i)$. Now the final solution space will have dimension $ \geq n^2+k^2- \sum_{i=1}^l (n+k) \cdot rank(B_i)$. 
%Now the rest we will think over. Whatever this is an approach to solve $D_i$’s.

%\subsubsection{A calculation on rank to always find a basis for solving $A$ and $C$}
%%%%%%%%%%%%%%%%%%%%%%%%%%%%%%%%%%%%%%%%%%%  Conclusion  %%%%%%%%%%%%%%%%%%%%%%%%%%%%%%%%%%%%%%%%%%%%%%%%%%
\section{Conclusion}
Centralizer codes, twisted centralizer codes and generalized twisted centralizer codes have length $n^2$ which is a reason that it cannot fit to most of the famous linear codes. But intertwining codes  can reach most of the linear codes because it is of length $nk$. So, we have taken intertwining codes and try to make a correspondence between it and existing linear codes. We have found an upper bound on minimum distance and proposed two decoding algorithms which take less storage memory.\\

%%%%%%%%%%%%%%%%%%%%%%%%%%%%%%%%%%%%%%%%%%%%%%%%%%%%%%%%%%%%%%%%%%%%%%%%%%%%%%%%%%%%%%%%%%%%%%%%%%%%%%%%%%%
\textbf{Acknowledgements}
 
The author Joydeb Pal is thankful to DST-INSPIRE for financial support to pursue his research work.

%%%%%%%%%%%%%%%%%%%%%%%%%%%%%%%%%%%%%%%%%%%%%%%%%%%%%%%%%%%%%%%%%%%%%%%%%%%%%%%%%%%%%%%%%%%%%%%%%%%%%%%%%%%

\end{document}